\documentclass[11pt]{article}
\usepackage[left=1in,top=1in,right=1in,bottom=1in]{geometry}
\usepackage{times}

\usepackage{graphicx}
\usepackage{url}
\usepackage{verbatim}

\usepackage{amsmath}
\usepackage{amssymb}
\usepackage{amsthm}
\usepackage{algorithm}

\newtheorem{definition}{Definition}

\newtheorem{proposition}{Proposition}

\begin{document}

\title{Mistakes in Games}

\author{Sam Ganzfried\\
Ganzfried Research \\
sam@ganzfriedresearch.com
}


\date{\vspace{-5ex}}

\maketitle

\begin{abstract}
We define a new concept of ``mistake'' strategies and actions for strategic-form and extensive-form games, analyze the relationship to prior main game-theoretic solution concepts, study algorithms for computation, and explore practicality. This concept has potential applications to cybersecurity, for example detecting whether a human player is illegally using real-time assistance in games like poker.
\end{abstract}

\section{Introduction}
\label{se:intro}
A strategy in a game can obtain poor performance for one of two primary reasons. The first is that it can put positive probability on a specific action (or pure strategy for the case of a strategic-form game) that is clearly ``bad'' and should never be played. And the second is that the strategy does not assign probabilities for the actions that properly balance and lead to the strategy being ``predictable'' and exploitable, despite not playing any action that is clearly bad on its own.

The classic concept used to describe these ``bad'' actions and strategies is that of dominance. A pure strategy in a strategic-form game is strictly dominated if it performs strictly worse than another strategy for all possible strategies that can be used by the opponent. Let $S_i$ denote the set of pure strategies of player $i$, $S_{-i}$ denote the strategy space of all players excluding $i$, and $u_i$ denote the utility function for player $i$. Then pure strategy $s'_i$ \emph{strictly dominates} $s''_i$ if $u_i(s'_i,s_{-i}) > u_i(s''_i,s_{-i})$ for all $s_{-i} \in S_{-i}.$ 

Playing any strategy that is strictly dominated is clearly bad (since there exists another strategy that will always do strictly better). No rational agent would ever play a strategy that is strictly dominated. Often strictly-dominated strategies are removed from the game in advance of game-theoretic analysis (e.g., computing a solution concept such as Nash equilibrium). It is clear that no strategy that is strictly dominated can receive any probability mass in any Nash equilibrium. In fact we can go further, and iteratively remove first one strictly dominated strategy for one player, then another strictly-dominated strategy for a player, etc., until there are no more strictly-dominated strategies remaining for any player. It is well known both that this process results in the same reduced strategy spaces of the players regardless of the order of elimination of the dominated strategies, and also that the set of Nash equilibria in the reduced game is identical to the set of equilibria in the initial game. Furthermore, this procedure is easy to apply computationally. For these reasons, this procedure is often used as a pre-processing step before applying an algorithm for computing Nash equilibrium.

A further solution concept that has also been widely studied is that of \emph{weak dominance}. The pure strategy $s'_i$ \emph{weakly dominates} $s''_i$ if $u_i(s'_i,s_{-i}) >= u_i(s''_i,s_{-i})$ for all $s_{-i} \in S_{-i},$ where the inequality is strict for at least one of the $s_{-i}$. (Note that the latter condition is included in case the strategies $s'_i$ and $s''_i$ obtain identical payoffs against all opponents, in which case it does not seem right to say that one dominates the other.) As for strict dominance, it also seems clearly bad and irrational for an agent to play a strategy that is weakly dominated, as another strategy will always do at least as well and sometimes strictly better. Unlike for strict dominance, it is possible for a weakly-dominated strategy to be played with positive probability in equilibrium, and if weakly-dominated strategies are removed iteratively for the players the order of removal can result in reduced games that have different sets of equilibria (though all of the reduced games will contain at least one Nash equilibrium from the initial game). The procedure of iteratively removing weakly-dominated strategies can also be done efficiently, and has also been applied as a pre-processing step for algorithms with the goal of computing one Nash equilibrium in large games.

Note that these approaches can both be extended naturally to more than two players, and apply to both zero-sum and non-zero-sum games, though in this paper we will be primarily studying two-player zero-sum games. Pure strategies can also be dominated (strictly or weakly) by mixed strategies in addition to just pure strategies, and there exists an efficient algorithm based on a linear program formulation for determining whether a strategy is dominated by any mixed strategy~\cite{Conitzer05:Complexity_of_iterated_dominance}. 

Several concepts have been considered that extend the concept of dominance in several ways, though none have received significant attention or been applied in practice. One work considers a generalized strategy eliminability criterion for two player general-sum games that considers whether a given strategy is eliminable relative to given sets of dominator and eliminee subsets of the players' strategies~\cite{Conitzer05:Generalized}. The concept spans a spectrum of eliminability criteria from strict dominance (when the sets are as small as possible) to Nash equilibrium (when the sets are as large as possible). However, the problem of eliminating a strategy according to this criterion in general is computationally hard (coNP-complete), though in certain special cases can be done efficiently.

 
For extensive-form games of imperfect information (which are often represented as game trees, with chance moves, player moves, information sets of game states, and payoffs at leaf nodes), many of these approaches are not applicable (even for two-player zero-sum). The concept of domination is still applicable, but in a different way. In extensive-form games, a pure strategy consists of an entire contingency plan, which specifies which action should be taken at every possible information set that could be encountered (even for those that are impossible to reach given actions taken previously). There are exponentially many pure strategies in the size of the game tree. If a pure strategy is dominated in an extensive-form game, we cannot simply remove it from the game tree, because the strategy consists of a selection of actions at all information sets, and some of the actions may also be part of other non-dominated strategies. One option would be to convert the game to a strategic-form game, and then we can eliminate dominated strategies in the strategic-form game as described above; however, the size of the strategic-form game is exponential in the size of the extensive-form game, and performing the transformation is often computationally infeasible.

In order to apply domination in a computationally feasible way in extensive-form games, we need to apply a significantly stronger form. Rather than eliminating dominated \emph{strategies}, which constitute full contingency plans, instead we can eliminate dominated \emph{actions}---a specific action at a specific information set. In order for one action $a_i$ at an information set $I$ to dominate action $a'_i$ also at I, it must be the case that the payoff for player $i$ at every possible leaf node succeeding $a_i$ exceeds the payoff for player $i$ at every possible leaf node succeeding $a'_i$. This is an extremely strong requirement, and it is very difficult for actions to be dominated according to this definition (particularly for actions closer to the root of the tree). Note that for dominated actions, we can apply analogous concepts of strict, weak, and iterated domination, as described above for strategic-form games. These can still be applied efficiently, however are less likely to be useful due to the stricter requirements. 

As an example, consider the small game of Kuhn poker, with game tree shown in Figure~\ref{fi:kuhn-poker}. In Kuhn poker, there is a three-card deck consisting of a King (K), Queen (Q), and Jack (J). Initially each of two players places an ante of \$1 (so there is an initial \emph{pot} of \$2). Each player is dealt one of the cards privately (while the third card is not dealt to any player). Player 1 is allowed to bet \$1 or to \emph{check}. If P1 bets, then P2 is allowed to \emph{call} and match the bet, or to \emph{fold} and forfeit the hand. If player 2 matches the bet, then the player with the winning hand wins the full pot of \$4. If player 2 folds, then player 1 wins the pot (of \$2). If player 1 chose to check in the first round, then player 2 is allowed to bet \$1 or to check. If player 2 also checks, then the player with the highest card wins the pot of \$2. If player 2 bets when facing a check, then player 1 can call or fold.   

\begin{figure}[!ht]
\centering
\includegraphics[scale=0.8]{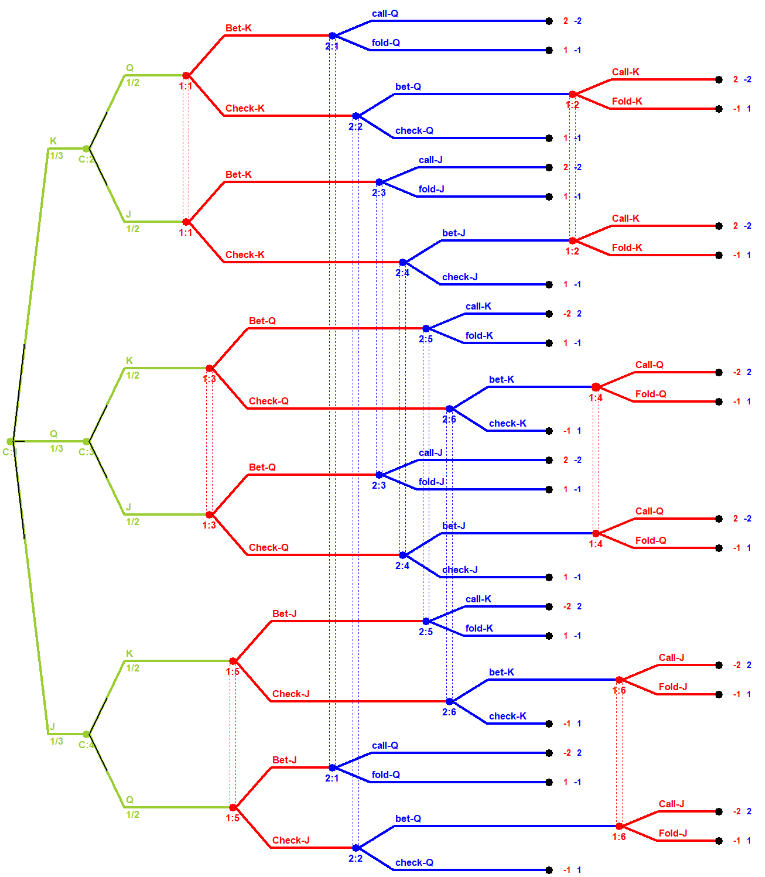}
\caption{Original game tree for Kuhn poker.}
\label{fi:kuhn-poker}
\end{figure}

In Kuhn poker, it turns out that domination plays a very significant role. First, we can remove the actions for both players where they call a bet with a Jack (because calling will always result in losing an additional \$1 regardless of the other player's card). Similarly, we can also remove the actions for both players where they fold with a King when facing a bet. For player 2, we can also remove the action of checking with King after facing a check by player 1, because betting will always result in at least as big a payoff. Note that for player 1 we cannot remove the action of checking with a King as the initial action, because there could be benefit to checking and inducing player 2 to bet with a weaker hand that would have folded had we bet ourselves.
As it turns out, after removing these dominated actions from the tree, several further actions become dominated. Once neither player is calling a bet with a Jack or folding to a bet with a King, it turns out that it is now dominated for either player to bet with a Queen. The reduced game after removing all iteratively dominated actions is shown in Figure~\ref{fi:kuhn-poker-IDR}. 

\begin{figure}[!ht]
\centering
\includegraphics[scale=0.7]{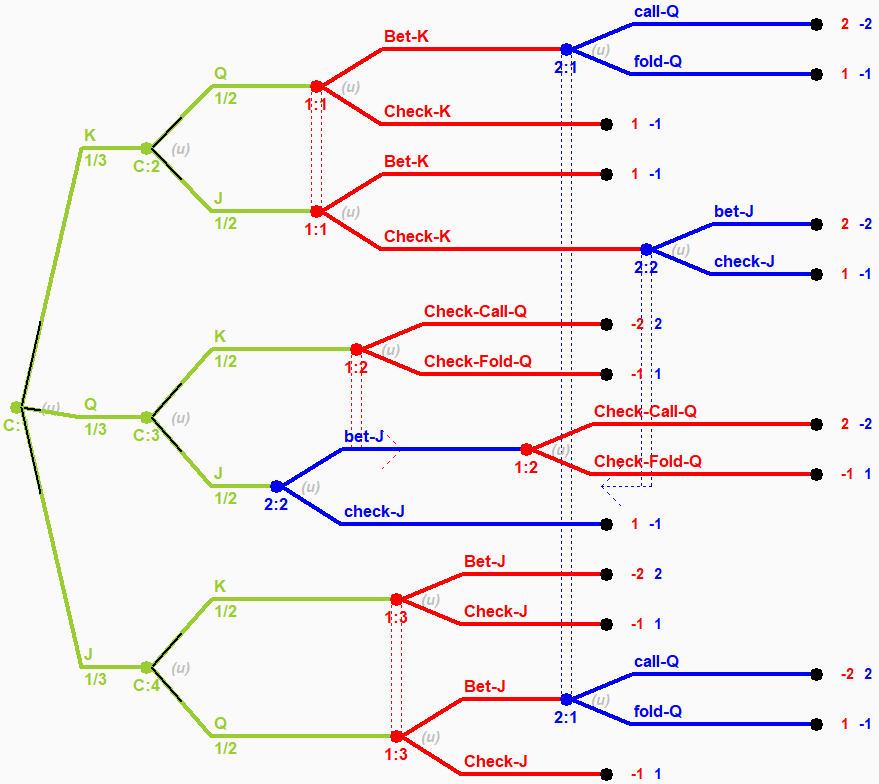}
\caption{Game tree for Kuhn poker after iterated removal of dominated actions.}
\label{fi:kuhn-poker-IDR}
\end{figure}

As it turns out, in this game simply avoiding dominated actions has been demonstrated to obtain reasonably strong performance~\cite{Gibson14:Regret}. A tournament was run between the following six agents: uniform random strategy (Uni), a strategy that plays no strictly dominated actions (does not call with Jack, fold King, or check King when facing a check) but is otherwise uniform random (ND), a strategy that plays no iteratively dominated actions (no dominated actions and neither player bets with Queen) but is otherwise uniform random (NID), and three Nash equilibrium strategies (note that the game contains infinitely many equilibrium strategies parametrized by a single parameter, which have been previously computed~\cite{Kuhn50:Simplified}). The Nash equilibrium strategies outperform the others (both in terms of head-to-head experiments and worst-case exploitability), followed by NID and then ND, with Uni performing worst (unsurprisingly). Despite losing to the equilibrium, both NID and ND perform significantly better than Uni, and obtain positive utility overall taking into account all opponents. This demonstrates that, for this game, simply avoiding (iteratively) dominated actions can lead to a significant improvement in performance.

However, as we consider even slightly larger games, the role of dominated actions quickly becomes irrelevant, as it becomes so difficult for an action to dominate another that very few actions are dominated. For example, if we consider the same game of Kuhn poker but with a 4-card deck instead of a 3-card deck, the only actions that are iteratively dominated are those analogous to the first round of dominated strategies in classic Kuhn poker (calling with the worst possible hand, folding with the best possible hand, or checking with the best possible hand when facing a check). We can not rule out potentially calling with the second worst possible hand, or folding with the second best possible hand. And furthermore, we cannot iteratively rule out any additional actions after the first round of eliminations, in contrast to the 3-card variant where we can remove the aggressive betting actions with Queen. It turns out that this same conclusion will extend to an $n$-card variant; regardless of the deck size (given that it exceeds 3), the only dominated actions will be to call with the worst possible hand, fold with the best possible hand, and check the best possible hand when facing a check, and no additional actions will be iteratively dominated after removing these. While removing these three actions from the game tree in the small 3-card variant ends up being significant, this comprises such a tiny fraction of the overall game tree for larger $n$ that it is meaningless. For even larger variants, such as no-limit Texas hold 'em, the conclusion is similar. The only actions that are (iteratively) dominated will be terminal actions of calling a bet with the worst possible hand, folding to a bet with the best possible hand, or checking after the opponent has checked with the best possible hand. 
The requirement that an action must outperform another action at every possible leaf node that can result from taking the action is such a strong requirement that the concept of dominated actions is essentially meaningless in most realistic games other than very small games or games that have been specifically contrived.

Nonetheless, many actions in large imperfect-information games, such as poker, seem to be clear ``mistakes,'' such as calling with a weak hand or folding with a reasonably strong hand on any round. Identifying such mistakes could be of value for several reasons. First, if there are a lot of them, then identifying and removing them from the game tree could lead to significantly faster running times of equilibrium-finding algorithms. Furthermore, it could be helpful for human players, who are looking to improve their play but not capable of computing a close approximation of full Nash equilibrium strategies, to at least be able to assess their play by determining whether specific actions are mistakes on their own.  

While no more meaningful concepts of ``mistake'' actions have been proposed for extensive-form games, the concept of removing bad actions has been taken into account within the operation of one of the popular algorithms for equilibrium computation. The counterfactual regret minimization algorithm (CFR)~\cite{Zinkevich07:Regret} has been applied to approximate Nash equilibrium strategies in large imperfect-information games, and has led to computer agents that defeat some of the strongest human players~\cite{Moravcik17b:DeepStack,Brown17:Superhuman,Brown19:Superhuman}. 
Recently it has been shown that the speed of convergence can be significantly improved if actions that are performing poorly within the course of the algorithm (i.e., have negative or small positive regret for not playing the action) are pruned early on in the algorithm~\cite{Brown17b:Reduced,Brown15:Regret,Brown17:Dynamic}. These algorithms periodically check for actions that have regret below a certain threshold (which may be modified dynamically) and remove them from the game tree so that they are no longer updated in future iterations. Note that the approach is particularly beneficial when bad actions are removed early on on the game tree, as the entire subtree following that action can then also be removed. (Recall that for the concept of dominated actions it is much easier for an action to be dominated later on in the game tree than early because there are fewer succeeding leaf nodes for which worse payoff is needed.) 
While these approaches lead to improved runtimes for equilibrium computation, they do not lead to a clear definition of ``mistake'' actions. 

CFR can also be applied to games with more than two agents and non-zero-sum games, though is not guaranteed to converge to Nash equilibrium for these games. However, it does have one notable property that it is guaranteed in the limit to avoid putting positive probability on strictly dominated strategies or actions~\cite{Gibson14:Regret}. While this guarantee is not useful in practice, it is the only theoretical result known so far for these game classes.

Considering all the prior results, there fails to be a rigorous concept of what constitutes a mistake action in two-player zero-sum extensive-form games beyond dominance, which is useless in games other than certain very small games or games that are specifically contrived. 

\section{Mistakes}
\label{se:mistakes}
We define a \emph{mistake} as, simply, a strategy that is played with probability zero in every Nash equilibrium. This definition can apply to both strategic-form and extensive-form games, and we can also define an action in an extensive-form game to be a mistake if it is played with probability zero in every Nash equilibrium. It has been proven that even for two-player general-sum symmetric games it is NP-hard to determine whether a given strategy $s_i$ is a mistake~\cite{Gilboa89:Nash,Conitzer08:New}. However, for two-player zero-sum games, this can be determined in polynomial time. 

\begin{definition}
A strategy is a \emph{mistake} if it is played with probability zero in all Nash equilibria.
\end{definition}

Procedure to compute whether strategy $s'_i$ is a mistake: 
\begin{enumerate}
\item Compute an equilibrium by solving the LP; this determines the value of the game to player $i$, $v^*_i.$
\item Solve the LP that maximizes the component of vector variable $x$ corresponding to strategy $s'_i$ from the LP in the preceding step subject to an additional constraint that player $i$'s strategy is an equilibrium (i.e., that the expression used as objective in the preceding step is equal to $v^*_i$). Let $\hat{v}$ denote the optimal objective value of this LP.
\item If $\hat{v} = 0$ then $s'_i$ is a mistake, and otherwise it is not.
\end{enumerate}

The actual linear programs are as follows. First we create the formulation assuming that we play the role of player 1 (i.e., that $i$ = 1). Here $A$ is the $m \times n$ payoff matrix, $e$ and $f$ are the scalar $1$, $E$ is $1 \times m$ matrix of all 1's, and $F$ is $1 \times n$ matrix of all 1's. 

\begin{eqnarray*}
\mbox{maximize}_{x,q} && -q^T f \\
\mbox{ subject to } && x^T (-A) - q^T F \leq 0\\
&& x^T E^T = e^T \\
&& x \geq 0 \\
\end{eqnarray*}

We set $v^*_1$ equal to the optimal objective value from solving this LP. 
Suppose that the potential mistake strategy of player 1 corresponds to his $k$'th pure strategy.

\begin{eqnarray*}
\mbox{maximize}_{x,q} &&x_k \\
\mbox{ subject to } && x^T (-A) - q^T F \leq 0\\
&& x^T E^T = e^T \\
&& x \geq 0 \\
&& -q^T f = v^*_1 \\
\end{eqnarray*}

Then we set $\hat{v}_1$ equal to the optimal objective value from this LP and test whether it is strictly greater than 0.

This procedure can be applied analogously for two-player zero-sum extensive-form games using the sequence-form LP formulation~\cite{Koller96:Efficient}. (That formulation is similar except that the matrices $E$ and $F$ are defined differently, as are $e$ and $f$ which are now vectors and not scalars.)
Furthermore, it can be modified to determine whether action $a_i$ in an extensive-form game is a mistake by setting the objective in the second step to be equal to the action sequence in the LP culminating in $a_i$ for player $i$. (In the sequence-form LP the variables correspond to all possible sequences of actions for the players, plus variables for the game value.)  

Note that this procedure involves actually computing a Nash equilibrium as a subroutine (as the first step), and so we are not viewing the concept as being useful from the perspective of being used as a pre-processing step to reduce the size of the game in advance of equilibrium computation, as several prior concepts considered previously have done. Instead, we are viewing this more at conceptual face value in order to describe actions that are ``bad'' according to a clear and rigorous definition. 

Despite the definition being compelling, it turns out that it is not strong enough to rule out even weakly dominated strategies (though it is clear that every strictly dominated strategy is a mistake). Note that we do not really mind that all weakly dominated strategies are not ruled out, as we do not expect weakly dominated strategies and actions to play any significant role in large realistic games. However, we can strengthen the concept to rule these out as well. 

\begin{proposition}
There exist games that contain a weakly dominated strategy that is not a mistake.
\end{proposition}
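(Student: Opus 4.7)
My plan is to construct a small two-player zero-sum strategic-form game that serves as an explicit witness: a game in which a pure strategy is weakly dominated yet receives positive probability in some Nash equilibrium. Because the statement is purely existential, a single compact example suffices.

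Concretely, I would take the $2\times2$ game in which player 1 has pure strategies $A,B$ and player 2 has pure strategies $L,R$, with player 1's payoffs $u_1(A,L)=1$, $u_1(A,R)=0$, $u_1(B,L)=0$, $u_1(B,R)=0$ (and $u_2 = -u_1$ for zero-sum). The first step is to verify that $B$ is weakly dominated by $A$ according to the definition given earlier: $u_1(A,L) > u_1(B,L)$ while $u_1(A,R) = u_1(B,R)$, so the weak inequality holds everywhere and is strict against $L$.

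The second step is to exhibit a Nash equilibrium in which $B$ is played with positive probability. I would simply check that the pure profile $(B,R)$ is a Nash equilibrium. Against $R$, player 1 is indifferent between $A$ and $B$ (both yield $0$), so $B$ is a best response. Player 2, who minimizes $u_1$ in the zero-sum setting, gets $u_1(B,L) = u_1(B,R) = 0$ when player 1 plays $B$, so $R$ is also a best response. Thus $(B,R)$ is a Nash equilibrium, and by the definition of mistake, $B$ is not a mistake even though it is weakly dominated.

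No step in this argument constitutes a real obstacle; the only conceptual point worth isolating is the reason such a phenomenon can occur at all: the dominator $A$ strictly outperforms $B$ only against $L$, but $L$ gives the opponent no incentive over $R$ when player 1 plays $B$, so the opponent's equilibrium behavior never exposes the gap between $A$ and $B$. This is precisely the type of game one must construct, and the $2\times2$ example above makes it concrete.
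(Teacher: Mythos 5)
Your proof is correct and takes essentially the same approach as the paper's: exhibit an explicit two-player zero-sum game containing a weakly dominated strategy that nevertheless receives positive probability in some Nash equilibrium. The paper's witness is a $4\times 4$ Rock--Paper--Scissors variant (RPSQ) in which the added strategy Q appears in one of infinitely many equilibria, whereas your $2\times 2$ game is a smaller, equally valid witness with the same underlying mechanism (the dominator only strictly outperforms the dominated strategy against an opponent action that is not forced in equilibrium).
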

\begin{proof}
Consider the game in Figure~\ref{fi:RPST}. This game is similar to Rock-Paper-Scissors, except that a fourth pure strategy labeled Q is added for both players. Q is weakly dominated for player 1 (and Q is then iteratively weakly dominated for player 2 after removing Q for player 1). However, Q is played with positive probability in one of the equilibria: in one P1 randomizes equally between R, P, and S and in the other he randomizes equally between P, S, and Q (while P2 randomizes equally between R, P, and S in both). 
\end{proof}

\begin{figure}[!ht]
\begin{center}
\begin{tabular}{c|*{4}{c|}}
&R &P &S &Q\\ \hline
R &0 &-1 &1 &0\\ \hline
P &1 &0 &-1 &1\\ \hline
S &-1 &1 &0 &0\\ \hline
Q &0 &-1 &1 &-1\\ \hline
\end{tabular}
\caption{Payoff matrix of RPSQ.}
\label{fi:RPST}
\end{center}
\end{figure}

\begin{definition}
A strategy is a \emph{strong mistake} if it is played with probability zero in all non-weakly-dominated Nash equilibria.
\end{definition}
\normalsize

Procedure to compute a strong mistake:
\begin{enumerate}
\item Compute an equilibrium by solving the LP; this determines the value of the game to player $i$, $v_i.$
\item Compute the best response of player $i$ to a fully mixed strategy $s^*_{-i}$ of player $-i$ subject to the constraint that an expected payoff of at least $v_i$ is attained in the worst case. This can be accomplished by solving a second linear program. Let $v'$ denote the optimal objective value. 
\item Solve the LP that maximizes the variable $x$ corresponding to strategy $s'_i$ from the LP in the preceding step subject to additional constraints that player $i$'s strategy is an equilibrium (i.e., that the expression used as objective in the first step is equal to $v_i$) and that player $i$'s strategy achieves the best response value against $s^*_{-i}$ (i.e., that the expression used as the objective in the second step is equal to $v'$). Let $\hat{v}$ denote the optimal objective value of this LP.
\item If $\hat{v} = 0$ then $s'_i$ is a mistake, and otherwise it is not.
\end{enumerate}

This procedure can be applied by solving a similar sequence of linear programs to the standard mistake formulation, using, for example, a uniform random mixture over all pure strategies of the opponent for the strategy $s^*_{-i}$.

The correctness of this procedure follows from the fact that any strategy that is a best response to a fully mixed strategy of the opponent is undominated~\cite{vanDamme87:Stability}. It is clear by definition that a strong mistake now rules out all weakly dominated strategies, though it may not eliminate all iteratively weakly-dominated strategies. We would need to be in quite a specific case for this to occur however. (The strong mistake $s'$ would have to be iteratively weakly dominated by another strategy $s''$ where $s'$ and $s''$ perform identically against a set of opponent strategies that are played with nonzero probability in equilibrium, but $s'$ performs strictly worse against some strategy for the opponent $t'$ and also strictly better against another strategy for the opponent $t''$, where both $t'$ and $t''$ are (iteratively) dominated themselves, where $t''$ can be eliminated first according to a sequence of iterated removal of weakly dominated strategies.) Note that for two-player zero-sum games an undominated Nash equilibrium is equivalent to the equilibrium refinement concept of a trembling-hand perfect equilibrium, and this concept has been applied to solving no-limit Texas hold 'em endgames~\cite{Ganzfried13:Improving}.


Though we are not proposing the concept of a mistake to be used as a part of a pre-processing step to speed up equilibrium computation (since the approach for computing whether a strategy is a mistake involves an equilibrium computation as a subroutine), we can show that in theory had all (or some subset of) the mistake strategies been removed from the game, the set of equilibria in the reduced game is a superset of the set of equilibria in the full game. 

\begin{proposition}
Let $G$ be a two-player zero-sum game, let $s_m$ be a mistake strategy for player 1, and let $G_m$ be the identical game to $G$ except with $s_m$ removed. Suppose $\sigma^*$ is a Nash equilibrium in $G$, and let $\sigma^*_m$ be the restricted strategy profile that is identical to $\sigma^*$ except that the component for strategy $s_m$ for player 1 is omitted. Then $\sigma^*_m$ is a Nash equilibrium of $G_m$. 
\end{proposition}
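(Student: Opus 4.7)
The plan is to exploit the defining property of a mistake strategy: it has probability zero in every Nash equilibrium. This means $\sigma^*$ already assigns probability $0$ to $s_m$, so deleting that component does not alter any probability mass. I would first record this as the key lemma: $\sigma^*_m$ is a well-defined mixed strategy in $G_m$ (its components still sum to $1$ and are nonnegative), and for every pure strategy $t_2$ of player 2, $u_1(\sigma^*_m, t_2) = u_1(\sigma^*, t_2)$, with the analogous identity for player 2. The equivalence of expected payoffs is the pivot for everything that follows.

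Next I would verify the two best-response conditions that define Nash equilibrium in $G_m$. For player 1, the pure strategy set in $G_m$ is obtained from $G$ by deleting $s_m$, so any deviation $s'_1$ available in $G_m$ is also available in $G$. Using the equilibrium condition of $\sigma^*$ in $G$ together with the payoff identity above, $u_1(s'_1, \sigma^*_{2,m}) = u_1(s'_1, \sigma^*_2) \leq u_1(\sigma^*_1, \sigma^*_2) = u_1(\sigma^*_{1,m}, \sigma^*_{2,m})$, so $\sigma^*_{1,m}$ is a best response in $G_m$.

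For player 2, the strategy set is unchanged between $G$ and $G_m$, so for any pure deviation $s'_2$ we have $u_2(\sigma^*_{1,m}, s'_2) = u_2(\sigma^*_1, s'_2) \leq u_2(\sigma^*_1, \sigma^*_2) = u_2(\sigma^*_{1,m}, \sigma^*_{2,m})$, again by the equilibrium property in $G$ and the payoff identity. Combining both inequalities gives $\sigma^*_m$ as a Nash equilibrium of $G_m$.

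I do not anticipate a real obstacle here; the proof is essentially a bookkeeping exercise once one invokes the definition of a mistake. The only subtlety worth flagging explicitly is that the argument uses two-player zero-sum structure only implicitly, via the assumption that mistakes are well-defined and testable (as established earlier in the paper); the payoff-identity argument itself would go through in any finite game in which $s_m$ receives probability zero in the equilibrium under consideration.
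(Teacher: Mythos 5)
Your proof is correct and follows essentially the same route as the paper's: both hinge on the fact that, since $s_m$ is a mistake, $\sigma^*$ already assigns it probability zero, so the restricted profile is well-defined, payoffs are unchanged, and neither player gains a profitable deviation in $G_m$. Your version is slightly more explicit about the payoff-identity bookkeeping and about the observation that the zero-sum hypothesis is not really needed for this step, but the argument is the same.
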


\begin{proof}
Suppose that $\sigma^*$ is a Nash equilibrium of $G$. Clearly player 1 cannot have a profitable deviation from $\sigma^*_m$ in $G_m$, since his strategy space is reduced while player 2's is identical and player 1 had no profitable deviations from $\sigma^*$ in $G$. Similarly, suppose that player 2 now has a profitable deviation in $G_m$ to strategy $t$, which is not in the support of $\sigma^*$. But the expected payoff for player 2 of $t$ against $\sigma^*$ in $G_m$ is identical to the payoff of $t$ against $\sigma^*$ in $G$, because $\sigma^*$ places probability zero on the mistake strategy $s_m$ for player 1. So player 2 cannot have a profitable deviation in $G_m$, and $\sigma^*_m$ is an equilibrium of $G_m$. 
\end{proof}

By induction, iteratively removing any subset of mistake strategies from the game cannot reduce the set of equilibria. So in theory had we removed all mistakes from the game, this would produce a smaller game that still contains all equilibria from the initial game. This applies to extensive-form as well as strategic-form games, and also to removing mistake actions in extensive-form games.

\section{Strategic-form game experiments}
\label{se:exp-sfg}
Our first set of experiments is on strategic-form games. We repeatedly generated two-player zero-sum strategic-form games with payoffs selected uniformly at random in [-1,1]. Table~\ref{ta:exp-sfg} gives results for the number of strategies that are dominated under several different dominance concepts, as well as mistakes. We generated $m \times n$ games with $m = n$ for values 3, 5, 10, 20, 30, and 50. For $m = 3,5$ we averaged over 10,000 trials, for $10,20$ over 1000 trials, and for $30,50$ over 100 trials. The table shows the average values for player 1 per game over the experiments. (Note that while results for player 2 will differ slightly from player 1 due to variance from the sampling, the expected values are identical because the payoffs are selected uniformly at random.) 

\renewcommand{\tabcolsep}{2pt}
\begin{table}[!ht]
\centering
\begin{tabular}{|*{7}{c|}} \hline
&3 &5 &10 &20 &30 &50 \\ \hline
Avg \# SDS &0.6386 &0.5223 &0.079 &0.001 &0.0 &0.0 \\  
Avg \# WDS &0.6386 &0.5223 &0.079 &0.001 &0.0 &0.0 \\
Avg \# iter. SDS &0.936 &0.7721 &0.089 &0.001 &0.0 &0.0\\
Avg \# iter. WDS &0.936 &0.7721 &0.089 &0.001 &0.0 &0.0\\ 
Avg \# mistakes &1.2011 &2.2313 &4.729 &9.742 &14.92 &24.85\\ \hline
\end{tabular}
\caption{Average number of strategies that are dominated under different concepts and mistakes per game for two-player zero-sum games with uniformly-randomly generated payoffs in [-1,1].}
\label{ta:exp-sfg}
\end{table}

We observe that weak dominance does not rule out any additional strategies beyond strict dominance in these games. This is because the payoffs are selected uniformly at random and the probability that two payoff entries are identical (which is a requirement for a strategy to be weakly but not strictly dominated) is zero. So any strategy that is weakly dominated must also be strictly dominated in this game class. Note that even if we considered other game classes that may be more realistic, adding even a tiny amount of noise to the payoffs would produce this same phenomenon. 

A second observation is that for the games with $m = 30$ and 50, zero strategies were dominated according to any of the domination concepts (and this would continue to be the case for games with $m > 50$). This indicates that domination is a worthless concept for reasonably-sized games, even for strategic-form games (while it is even less useful in ruling out dominated actions in extensive-form games). Domination only plays a useful role for very small games. (Note though that it is possible that the concepts are useful in certain very specific larger games.) 

For $m = 10$ we can start to see domination play a small role, with around 0.08 (less than 0.1\% of the strategies) being dominated, and very slightly more being ruled out from iterative domination. Once we get down to $m = 5$ we see that 0.5223 strategies are dominated (which is slightly over 10\% of the pure strategies), with 0.7721 (slightly over 15\%) iteratively dominated. And for tiny $3 \times 3$ games 0.6383 strategies are dominated on average (slightly over 20\% of the strategies), with 0.936 (slightly over 30\%) iteratively dominated.  

By contrast, the number of mistakes appears to approach 50\% of the total number of strategies for large values of $m$. Thus, in large realistic scenarios there are likely many strategies that are clearly bad and should never be played, and the standard concepts of dominance do not rule any of them out. One can view the number of mistakes as an upper bound on what we can hope to rule out by a concept extending domination that can be used as a potential pre-processing step (as described previously we cannot simply prune mistakes ourselves as a pre-processing step for equilibrium computation since determining whether a strategy is itself a mistake involves computing a Nash equilibrium as a subroutine).

While the disparity is most pronounced for large games, even for smaller $m$ the number of iteratively dominated strategies is far smaller than the number of mistakes.

\section{Extensive-form game experiments}
\label{se:exp-efg}
We next study an extensive-form game that is a generalization of 3-card Kuhn poker previously considered. The rules are identical to Kuhn poker except that the deck consists of $n$ cards from 1--$n$. This game 
has been studied previously in the context of leveraging qualitative models of equilibrium structure to lead to faster algorithms for equilibrium computation~\cite{Ganzfried10:Computing}.

For the $n=3$ version previously considered there were several dominated (and iteratively dominated) actions, that were demonstrated to play a pivotal role in the game. However, as $n$ is larger the percentage of actions that are weakly dominated becomes tiny (as a fraction of the total number of actions), and the role of domination is meaningless. We will show, however, that many of the game's actions are actually mistakes and are not played in equilibrium, and therefore there is again a big disparity between the set of strategies ruled out by dominance and the set of actions that are irrational.

As stated earlier, the only (iteratively)-dominated actions in this game are for player 2 to call vs. a bet with 1, for player 1 to call vs. a bet with 1 after player 1 checks and player 2 bets, for player 2 to fold vs. a bet with card $n$, for player 1 to fold vs. a bet with card $n$ if P1 checks and P2 bets, and for P2 to check with card $n$ after player 1 checks. So 5 total actions are dominated regardless of $n$ (assuming $n > 3$), and no further actions are iteratively dominated.

For $n = 4$, there are 16 different action (sequences) for each player (with 2 being dominated for player 1 and 3 being dominated for player 2). For this game it turns out that there are 5 mistake actions for player 1 and 4 mistake actions for player 2. For $n = 10$, there are 11 mistakes for player 1 and 14 for player 2 (out of 40 total actions). Values for $n$ up to 100 are given in Table~\ref{ta:exp-efg}.

\renewcommand{\tabcolsep}{3pt}
\begin{table}[!ht]
\centering
\begin{tabular}{|*{8}{c|}} \hline
&4 &5 &10 &20 &30 &50 &100 \\ \hline
Num dominated actions P1 &2 &2 &2 &2 &2 &2 &2 \\  
Num dominated actions P2 &3 &3 &3 &3 &3 &3 &3\\
Num mistakes P1 &5 &5 &11 &21 &29 &49 &93\\ 
Num mistakes P2 &4 &7 &14 &31 &46 &64 &93\\
Total num actions P1 &16 &20 &40 &80 &120 &200 &400\\ 
Total num actions P2 &16 &20 &40 &80 &120 &200 &400\\ \hline
\end{tabular}
\caption{Number of dominated actions and mistakes for Generalized Kuhn Poker with different deck size $n$.}
\label{ta:exp-efg}
\end{table}

While the concept of dominated actions seemed promising based on the analysis of the Kuhn poker tournament~\cite{Gibson14:Regret}, the results demonstrate that it is useless in larger realistic games (it would be similarly useless on variants commonly played such as Texas hold 'em). On the other hand, a very large number of actions (nearly $\frac{1}{4}$) are mistakes and should not be taken by rational players.

\section{Conclusion}
\label{se:con}
We presented a new concept of a mistake in two-player zero-sum games, described the relationship to previously-studied concepts, presented an algorithm for computing it, and demonstrated it experimentally in strategic-form and extensive-form games. The concept can be naturally extended to non-zero-sum and multiplayer games, though it is more challenging to compute (since computing a Nash equilibrium is PPAD-hard for those games), and furthermore Nash equilibrium is conceptually a more questionable solution concept for those games. While the concept is not useful in terms of being a pre-processing step for Nash equilibrium, it can be useful from the perspective of evaluating the weaknesses of strategies of bounded rational agents, and in particular for human strategies in large imperfect-information games, which likely have many mistake actions but very few dominated actions. This could have several potential applications, for example, identifying whether human players are illegally utilizing game-theoretic solvers for real-time play in games like online poker. Future work can look into creating a more refined concept of poor actions for extensive-form games that falls between dominated actions and mistakes that is not so strict so that it is common in practice, but is not so general that it can be computed more efficiently than a full equilibrium computation and can be applied as an efficient pre-processing step.

\bibliographystyle{plain}
\bibliography{C://FromBackup/Research/refs/dairefs}
\end{document}